\def\BibTeX{{\rm B\kern-.05em{\sc i\kern-.025em b}\kern-.08em
    T\kern-.1667em\lower.7ex\hbox{E}\kern-.125emX}}
\newtheoremstyle{italicProp}
  {3pt} 
  {3pt} 
  {\itshape} 
  {} 
  {\itshape\bfseries} 
  {.} 
  { } 
  {} 
\newtheorem{lemma}{Lemma}
\theoremstyle{definition}
\theoremstyle{remark}
\begin{document}

\title{Distributed Hierarchical Machine Learning for Joint Resource Allocation and Slice Selection in In-Network Edge Systems}

\author{Sulaiman Muhammad Rashid,
        Ibrahim Aliyu,
        Jaehyung Park,
        Seungmin Oh
        and Jinsul Kim%
\thanks{

This work was supported by the Institute of Information \& communications  Technology  Planning  \&  Evaluation(IITP) grant funded by the Korea government(MSIT), Project Name: Development of digital twin-based network failure prevention and operation management automation technology, Project Number: RS-2025-00345030, Contribution Rate: 50\%. This work was also supported by the Institute of Information \& Communications
Technology Planning \& Evaluation(IITP)-Innovative Human Resource
Development for Local Intellectualization program grant funded by the Korea government(MSIT)(IITP-2025-RS-2022-00156287), Contribution Rate: 50\%.

Sulaiman Muhammad Rashid, Ibrahim Aliyu, Jaehyung Park, Seungmin Oh and Jinsul Kim
are with the Department of Intelligent Electronics and Computer Engineering,
Chonnam National University, Gwangju 61186, South Korea
(e-mails: msulaimanrashid@jnu.ac.kr; aliyu@jnu.ac.kr;
hyeoung@jnu.ac.kr; osm5252kr@gmail.com; jsworld@jnu.ac.kr).%
}}
\maketitle

\begin{abstract}
The Metaverse promises immersive, real-time experiences; however, meeting its stringent latency and resource demands remains a major challenge. Conventional optimization techniques struggle to respond effectively under dynamic edge conditions and high user loads. In this study, we explore a slice-enabled in-network edge architecture that combines computing-in-the-network (COIN) with multi-access edge computing (MEC). In addition, we formulate the joint problem of wireless and computing resource management with optimal slice selection as a mixed-integer nonlinear program (MINLP). Because solving this model online is computationally intensive, we decompose it into three sub-problems—(SP1) intra-slice allocation, (SP2) inter-slice allocation, and (SP3) offloading decision—and train a distributed hierarchical DeepSets-based model (DeepSets-S) on optimal solutions obtained offline. In the proposed model, we design a slack-aware normalization mechanism for a shared encoder and task-specific decoders, ensuring permutation equivariance over variable-size wireless device (WD) sets. The learned system produces near-optimal allocations with low inference time and maintains permutation equivariance over variable-size device sets. Our experimental results show that DeepSets-S attains high tolerance-based accuracies on SP1/SP2 (Acc@1 = 95.26\% and 95.67\%) and improves multiclass offloading accuracy on SP3 (Acc = 0.7486; binary local/offload Acc = 0.8824). Compared to exact solvers, the proposed approach reduces the execution time by 86.1\%, while closely tracking the optimal system cost (within 6.1\% in representative regimes). Compared with baseline models, DeepSets-S consistently achieves higher cost ratios and better utilization across COIN/MEC resources.
\end{abstract}

\begin{IEEEkeywords}
Edge computing, Hierarchical learning, In-network computing, Slice selection, Resource allocation
\end{IEEEkeywords}


\section{Introduction}
\label{sec:introduction}

The metaverse has attracted considerable attention from academia and industry in the last few years, and significant advancements have been in this technology owing to recent developments such as extended reality, 5G, and edge intelligence, among others \cite{10415393}. However, the extremely high demand for resources (for example, computing, networking, and storage) in the Metaverse is one of the major challenges that prevent Metaverse deployment \cite{9944868}. An enormous number of resources is required to fulfill the Quality-of-Service (QoS) requirements and meet user experience expectations in the Metaverse. Metaverse, which can simultaneously host a hundred thousand users, is expected to support millions of users concurrently \cite{bhattacharya2023towards}. 

To meet these demands, mobile edge computing (MEC) has been widely adopted as a key enabler, bridging the gap between computational resources end-users to reduce latency and improve service quality \cite{10754988}. Although MEC offers remedies through remote offloading (RO), it does not meet concurrent user demands \cite{aliyu2023dynamic}. “Computing in the network” (COIN), sometimes termed ”in-network computing,” has emerged as a complementary approach that extends computing resources closer to the user by leveraging distributed nodes within the network \cite{9919270}. COIN processes tasks at various points along the data path, reducing latency and improving resource utilization. The idea of COIN is not to replace the MEC, but to collaborate with the MEC \cite{aliyu2023dynamic}, leveraging additional computational resources distributed across the network. However, augmenting computing resources or enabling COIN leads to heightened power consumption \cite{10575680}. Segal et al. \cite{10336710} have shown the potential of in-network computing for reducing network utilization cost; however, effectively allocating COIN resources in real-time to adapt to continually shifting user demands, while ensuring overall system availability, poses a critical challenge.

Network slicing allows for the creation of virtual networks (slices) over a shared physical infrastructure, with each slice designed to meet the specific requirements of different applications \cite{8685766, 9295415}. In the context of the Metaverse, slicing can provide tailored computational, bandwidth, and latency conditions to accommodate diverse use cases (from gaming to healthcare) \cite{10158923}. However, the dynamic allocation of resources across these slices is a challenging task; dynamic allocation ensures that resources are optimally distributed between communication and computation while adapting to the varying demands of users and tasks.

Previous researchers \cite{sasan2024joint}, addressed joint network slicing and in-network computing resource allocation using a water-filling-based heuristic algorithm. However, their focus was solely on managing resources between slices (inter-slice), without considering the resource management issues within the slices (intra-slice). Furthermore, heuristic algorithms can achieve optimal solutions in a feasible amount of time, but they are not easy to derive \cite{lia2022}.

Recently, AI approaches as proposed in the concept of intelligent edge and considered a key 6G enabler have been used to determine how to allocate computing and communication resources across networks \cite{8884234}. This methodology holds promise by replacing complex mathematical modeling of the system to derive mathematically tractable heuristics with a data-driven understanding of the network. For example, the authors of \cite{araujo2021hybrid} focused on the allocation of network resources to provide service function chains (SFCs). After proposing a linear integer programming problem, they developed a hybrid strategy combining the optimal approach with supervised machine learning (ML) to find adequate network resources while decreasing the resolution time. 

In this paper, we address the joint network slicing, inter-slice \& intra-slice radio, and in-network resource management problems. We formulated the problem in \cite{10827552} as a mixed-integer non-linear programming problem (MINLP). The optimal solution achieved through a standard optimization solver is leveraged to train our DeepSet-S model by feeding it with information about the user's task requests and the state of the network. Therefore, the optimal solutions do not need to be computed online and instead are replaced with the model to provide quicker resource allocation and task placement decisions, albeit at the expense of heavy training procedures. DeepSets-S introduces a distributed hierarchical encoder–decoder architecture that learns to approximate solver-generated optimal policies for intra-slice, inter-slice, and offloading decisions. The model incorporates a shared encoder and task-specific decoders, ensuring permutation equivariance across variable-size wireless device (WD) sets and enabling scalable inference at both the slice and network levels. Specifically, our paper makes the following key contributions: 

\begin{itemize}
    \item We decomposed the problem into three sub-problems: we first derive the optimal intra-slice resource allocation policy, then address the optimal inter-slice resource allocation policies, and finally, find the optimal offloading decision vector.
    \item We developed a set-based encoder–decoder model that learns intra- and inter-slice allocation policies from solver data through a shared permutation-equivariant encoder and a novel slack-aware decoder that ensures feasible bandwidth and computing allocations within capacity limits.
    \item We employ a multi-head decoder to infer the optimal offloading decision vector. The architecture jointly predicts whether to execute locally or offload. 
    \item We performed an extensive evaluation to show that the performance of the resulting system significantly outperforms baseline resource allocation schemes.  
\end{itemize}

The rest of this paper is organized as follows. Section II reviews related works. Section III presents the system model. Section IV describes the optimization problem formulation. Section V details the proposed DeepSets-S model. Section VI reports experimental results and evaluations, Section VII presents the discussions and limitations, and Section VIII concludes the paper.

\section{Related works}

\subsection{Computing in the network (COIN) / In-Network Computing}
Sapio et al\cite{sapio2017network} describe COIN as a dumb idea whose time has come. It extends computing infrastructure closer to the user, typically within proximity to end-users or IoT devices \cite{kunze2021investigating}. Several studies have explored the potential of COIN in various domains, including metaverse, holographic applications, IoT, smart cities, and multimedia applications \cite{lia2022, 11050417, aliyu2023dynamic, 10575680, 10018567, 10660516}. The authors in \cite{lia2022} considered the problem of optimal placement of delay-constrained tasks to COIN nodes in an intelligent edge based on specific requirements to minimize network resource usage for low-latency task execution. Similarly, in a study conducted by \cite{11050417}, the authors developed a task-offloading solution for delay-constrained Metaverse systems to determine the optimal decision for offloading rendering tasks to COIN nodes or edge servers. The work in \cite{10018567} proposes an SDN-based COIN framework to place reusable, delay-sensitive computing tasks directly within the data forwarding path. Aliyu et al. \cite{10575680} propose a COIN-assisted edge architecture for Industrial IoT to reduce computation latency via partial offloading over URLLC links. The authors formulate a joint optimization of offloading decisions, ratios, and resource allocations, leveraging a hybrid game-theoretic approach combined with Double Deep Q-Networks (DDQN) to support low-complexity and distributed resource management. Wu et al. \cite{10660516} optimized COIN performance by adjusting the Computing Data Unit (CDU) size to reduce end-to-end latency in multi-hop in-network computing systems.

These works have not addressed, however, the potential impact of slicing on COIN nodes. So far, it is unclear how to perform joint management of communication and computing resources in a COIN-enabled system under network slicing.

\subsection{ML-based Resource Management}

Solving targeted optimization problems, such as minimizing energy consumption, latency, or bandwidth usage, either individually or jointly, typically involves addressing computationally intensive NP-hard problems. According to \cite{8847416}, machine learning (ML) techniques offer a promising alternative to traditional mathematical modeling of the system, enabling the derivation of mathematically tractable heuristics with a data-driven understanding of the system. For instance,  \cite{10041775} proposed a collaborative ML resource allocation scheme tailored for SDN-enabled fog computing, while \cite{lia2022} analyzed four classes of supervised learning methods; Decision tree (DT), Support vector machine (SVM), Multi-layer perceptron (MLP), and Bagged trees (BT) for task placement optimization at the network edge. In \cite{silva2024ml}, the authors proposed a predictive ML-based approach to monitor and guarantee the quality of network slice service by analyzing throughput and packet loss rate key quality indicators (KQI). Liu et al. \cite{10443273} proposed a learning-assisted end-to-end network slicing framework for resource allocation and task completion time minimization in heterogeneous networks. In \cite{hejazi2024learning}, SVM and MLP were employed for application placement in mobile edge computing servers. By formulating the problem as a two-stage stochastic optimization model, this study demonstrated the effectiveness of ML models in improving solution times and decision-making accuracy for user-to-server request allocation.

While previous works employing ML focus on instance based inference, where each decision is derived for individual users or requests in isolation, we addressed a set-structured/set input problem. We consider groups of WDs jointly, treating the task as a multiple instance learning problem. To capture this structure, we employ a DeepSet-based encoder–decoder architecture \cite{zaheer2017deep}, enabling the model to learn permutation-invariant representations of the WD sets.


\section{System Model}

We consider a slice-enabled network edge domain characterized by distinct application slices \( \mathcal{N} = \{1, 2, \ldots, N\} \), which have combinations of computing resources optimized for executing intensive tasks. We denote by \( \mathcal{I} = \{1, 2, \ldots, I\} \) as set of Wireless Devices (WD) that generate computationally intensive tasks each with varying computational demands. Task $i$ generated by $I$ characterized with input size $S_i$ can either be computed locally, or assigned to a slice $n$ through set \( \mathcal{A} = \{1, 2, \ldots, A\} \) of access points (AP).

In the case of assigning the task to slice, it goes through exactly one AP and can either be computed within the network on a set of COIN \( \mathcal{C} = \{1, 2, \ldots, C\} \), or can be offloaded to a set of MEC \( \mathcal{M} = \{1, 2, \ldots, M\} \). For simplicity, we denote $\overline{N}$ to be the set of all edge nodes (ENs), i.e  $\mathcal{C} \cup \mathcal{M}$. The APs and ENs form the set $\mathcal{E} \triangleq \overline{N} \cup \mathcal{A}$ of edge resorces. For all the task generated, there is expected number of instructions $R_i$ required to perform the computation. Since the WD, COIN and MEC have different instruction set of architectures, the required number of instructions may also vary. Therefore, for a task generated by WD $i$ we denote by $\mathcal{L}_i$, $\mathcal{L}_{i,n}$ as the expected number of instructions locally and in slice $n$ respectively which we considered them to be estimated by methods described in \cite{jovsilo2022joint,8314708}. 

We define set of decisions for task $i$ as \( \Delta_i = \{ i \} \cup \{ (a, j, n) \,|\, a \in \mathcal{A}, j \in \overline{N}, n \in \mathcal{N} \} \)  and we will use $\delta_i$ \(\in\) $\Delta_i$ to indicate the decision for WD $i$'s task i.e $\delta_i = i$ indicates that the task should be performed locally, and $\delta_i = (a, j, n)$ indicates that task should be offloaded through AP $a$ to node $j$ in slice $n$. Hence, we define a decision vector \( \boldsymbol(\delta) = (\delta_i)_{i \in \mathbb{I}} \) as the collection of the decisions of all WD's and we define the set \( \Delta = \times_{i \in \mathbb{I}} \Delta_i \), i.e., the set of all possible decision vectors:

WD's utilizing Access Point \( a \) within Slice \( n \) are captured in \( O_{a,n}(\boldsymbol{\delta}) \), representing \( i \in \mathcal{I} \) where \( \delta_i = (a, \cdot, n) \), spanning across all available access points \( a \) and slices \( n \). The aggregation of these sets across all slices yields \( O_a(\boldsymbol{\delta}) \), indicating all WDs employing Access Point \( a \) for task processing. Similarly, WD's employing node \( j \) within Slice \( n \) are compiled in \( O_{j,n}(\boldsymbol{\delta}) \), denoting \( i \in \mathcal{I} \) where \( \delta_i = (\cdot, j, n) \), with \( O_j(\boldsymbol{\delta}) \) encompassing all WD's utilizing node \( j \) across slices. The singleton set \( O_i(\boldsymbol{\delta}) \) distinguishes whether a WD \( i \) performs local computation (\( O_i(\boldsymbol{\delta}) = \{ i \} \)) or not (\( O_i(\boldsymbol{\delta}) = \emptyset \)). Finally, \( O_l(\boldsymbol{\delta}) \) represents all WDs performing local computation within the given decision vector \( \boldsymbol{\delta} \).

Fig. \ref{Architecture} shows an example of a slicing-enabled COIN-MEC architecture with $\mathcal{N} = 3$ slices, $\mathcal{I} = 6$ WDs, $\mathcal{A} = 3$ APs, $\mathcal{M} = 1$ MEC and $\mathcal{C} = 6$ COINs.

\subsection{Communication Resources}

In the network, the communication resources are managed both at the network level and at the slice level. At the network level, the radio resources of each access point (AP) are shared across the slices based on an interslice radio resource allocation policy, denoted as \( \mathbb{R}_{\omega}: \Delta \rightarrow \mathbb{R}_{[0,1]}^{|\mathcal{A}|\times|\mathcal{N}|} \). This policy determines the inter-slice radio resource provisioning coefficients \( \omega_{n}^{a} \), where \( \forall (a, n) \in \mathcal{A} \times \mathcal{N} \), \( \omega_{n}^{a} \leq 1 \).

At the slice level, the radio resources assigned to each slice are shared among the WD's according to an intraslice radio resource allocation policy \( \mathbb{R}^{n}_{\phi_a}: \Delta \rightarrow \mathbb{R}^{|\mathcal{A}|\times|\mathcal{I}|}_{[0,1]} \). This policy determines the intra-slice radio resource provisioning coefficients \( \phi_{i,a}^{n} \in [0,1] \), where \( \forall a \in \mathcal{A} \) and \( \forall i \in O_{a,n}(\boldsymbol{\delta}) \), such that \(\sum_{i \in O_{a,n}(\boldsymbol{\delta})} \phi_{i,a}^{n} \leq 1 \), \(\forall (a, n) \in \mathcal{A} \times \mathcal{N} \).

The achievable Physical rate of a WD \( i \) at AP \( a \), denoted as \( R_{i,a} \), captures the expected channel conditions, which can be estimated through historical measurements. Given \( R_{i,a} \) and the provisioning coefficients \( \omega_{a}^{n} \) and \( \phi_{i,a}^{n} \), the uplink rate of WD \( i \) at AP \( a \) in slice \( n \) is expressed as:

\begin{equation}
\mathcal{U}_{i,a}^{n}(\boldsymbol{\delta}, \mathbb{R}_b, \mathbb{R}_{\phi_a}^{n}) = \omega_{a}^{n} \phi_{i,a}^{n} R_{i,a}
\end{equation}

The uplink rate, along with the input data size \( S_i \), determines the transmission time of WD \( i \) in slice \( n \) at AP \( a \):

\begin{equation}
T^{tx,n}_{i,a}(\boldsymbol{\delta}, \mathbb{R}_b, \mathbb{R}_{\phi_a}^{n}) = \frac{S_i}{\mathcal{U}_{i,a}^{n}(\boldsymbol{\delta}, \mathbb{R}_b, \mathbb{R}_{\phi_a}^{n})}
\end{equation}


\begin{figure}[!ht]
  \hspace*{-0.11\linewidth}
  \includegraphics[width= 1.17\linewidth]{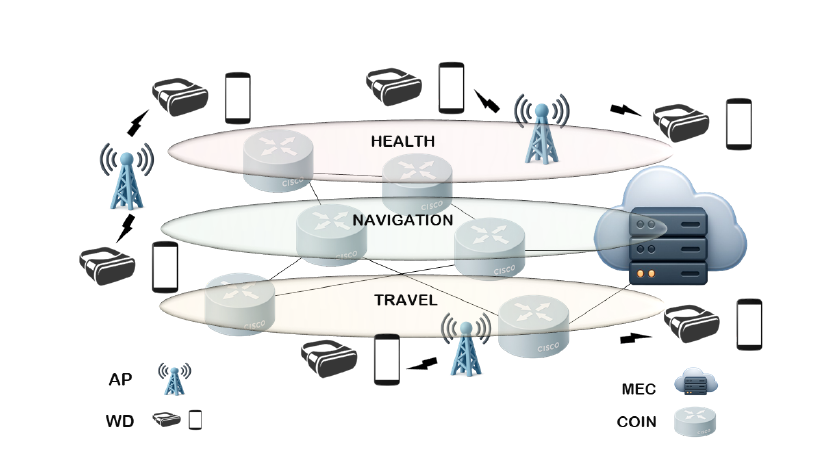}
  \caption{Slicing enabled COIN-MEC System with $\mathcal{N} = 3$ slices, $\mathcal{I} = 6$ WDs, $\mathcal{A} = 3$ APs, $\mathcal{M} = 1$ MEC and $\mathcal{C} = 6$ COINs}
  \label{Architecture}
\end{figure}

\subsection{Computing Resources}

In our system model, we distinguish between two main categories of computing resources: edge resources and local resources. Within this framework, each slice \( n \) is equipped with a specific combination of computing resources $\overline{N}$ equal to $\mathcal{C} \cup \mathcal{M}$ tailored for executing particular type of task, encompassing CPUs, GPUs, NPUs, and FPGAs. We denote by \( F_{j}^{n} \) the computing capability of node $j$ in slice $n$.   The allocation of these computing resources among the WD's is controlled by intra-slice computing power allocation policies \( \mathbb{R}^{n}_{\phi_{j}} : \Delta \rightarrow \mathbb{R}^{|\mathcal{\overline{N}}|\times|I|}_{[0,1]}\). This policy determines the provisioning coefficients \( \phi_{i,j}^{n} \in [0,1], \quad \forall i \in  O_{j,n}(\boldsymbol{\delta}) \), such that $\sum_{i \in  O_{j,n}(\boldsymbol{\delta})}\phi_{i,j}^{n} \leq 1, \quad \forall (j,n) \in \mathcal{N} \times \overline{N}$ ensuring that each WD receives a portion of the available computing power within the slice. Specifically, the computing capability $F_{j}^{n}$ allocated to WD \( i \) within the  cloud in slice \( n \) is expressed in the following equation: 
\begin{equation}
\label{eqn3}
F_{i,j}^{n}(\boldsymbol{\delta}, \mathbb{R}^{n}_{\phi_j}) = \phi_{i,j}^{n} F_{j}^{n} \qquad  j \in \overline{N}
\end{equation}

We assumed like in the literature \cite{11050417, lia2022}, and justified by empirical studies \cite{zukerman2013introduction}, the arrival rate of requests of task generated by WD $i$ at node $j$ of slice $n$ follows a Poisson distribution with parameter $\alpha_i$. The expected number of instructions \( \mathcal{L}_{i,n} \) required to execute a task generated by WD $i$ in slice $n$ is exponentially distributed. Therefore, all nodes can form an M/M/1 queuing model to process their corresponding computing tasks. The total request arrival rate at node $j$, which still follows a Poisson distribution, can be defined as:

\begin{equation}
\label{eqn5}
\sum_{i \in \mathcal{I}} \delta_{ij}\alpha_i, \qquad  j \in \overline{N}
\end{equation}

where $\delta_{ij}$ is the binary decision variable. It is equal to 1 if task generated by WD $i$ is executed by node $j$ and equal to 0 otherwise.

Thus, exploiting the queuing theory, the average computation time for a generic task at node $j$ of slice $n$ can be derived as:

\begin{equation}
T^{ex}_{i,j}(\boldsymbol{\delta}, \mathbb{R}^{n}_{\phi_{j}}) = \frac{1}{\frac{F_i,j^n}{\mathcal{L}_{i,n}} - \sum_{i \in \mathcal{I}} \delta_{ij}\alpha_i}, \qquad  j \in \overline{N}
\end{equation}

To keep the queue stable, the average arrival rate, $\alpha_i$ should be smaller than the average service rate, as described by the following equation:

\begin{equation}
\frac{F_i,j^n}{\mathcal{L}_{i,n}} - \sum_{i \in \mathcal{I}} \delta_{ij}\alpha_i > 0, \qquad  j \in \overline{N} 
\end{equation}

In addition to the cloud resources, each WD possesses local computing capabilities denoted by \( F_{i}^{l} \) which may vary over different devices. The local execution latency \( T^{ex}_{i} \) of WD \( i \) is simply expressed as 
\begin{equation}
T^{ex}_{i} = \frac{\mathcal{L}_i}{F^{l}_{i}}
\end{equation}

\subsection{Cost Model}
We define the system cost as the aggregate completion time of all WD's. Before providing a formal definition, we introduce the shorthand notation:

\begin{equation}
\mathcal{T}_{i,e}^{n} = \begin{cases}
\frac{S_i}{R_{i,e}} & \text{if } i \in \mathcal{I}, e \in \mathcal{E} \cap A \\
\frac{1}{\frac{F_i,j^n}{\mathcal{L}_{i,n}} - \sum_{i \in \mathcal{I}} \delta_{ij}\alpha_i} & \text{if } i \in \mathcal{I}, e \in \mathcal{E} \cap \overline{N} \\
\end{cases}
\end{equation}

For $e \in \mathcal{E} \cap A$ we have that $e$ is a communication resource and $\mathcal{T}_{i,e}^{n}$ is the minimum transmission time that WD $i$ will achieve if it is the only WD offloading its computation through AP $e$ in slice $n$. Similarly, for $e \in \mathcal{E} \cap C$ we have that $e$ is a computing resource and $\mathcal{T}_{i,e}^{n}$ is the minimum execution time that WD $i$ will achieve if it is the only WD offloading its computation to Node $j$ in slice $n$

To facilitate notation, we define the indicator function for WD \( i \):

\begin{equation}
{I}(\delta_i, \delta) = \begin{cases}
1 & \text{if } \delta_i = \delta \\
0 & \text{otherwise}
\end{cases}
\end{equation}

The cost of WD \( i \) is then determined by the task completion time, considering both local computation and offloading to edge:

\begin{equation}
\begin{split}
C_i(\boldsymbol{\delta}, \mathbb{R}_\omega, \mathbb{R}_{\phi_a}, \mathbb{R}_{\phi_j}) = & T^{\text{ex}}_{i} {I}(\delta_i, i) \\
& + \sum_{n \in N} \sum_{j \in \overline{N}} \sum_{a \in A} ( \frac{\mathcal{T}^{n}_{i,a}}{\omega_{a}^{n} \phi_{i,a}^{n}}  \\
& \left. + \frac{\mathcal{T}^{n}_{i,j}}{\phi_{i,j}^{n}}  \right){I}(\delta_i, (a, j, n))
\end{split}
\end{equation}

Here, \( (\mathbb{R}_{\phi_a}, \mathbb{R}_{\phi_j}) = ((\mathbb{R}_{\phi_a}^{n}, \mathbb{R}_{\phi_j}^{n},)) n \in N \)  represents the collection of slice policies.

The cost of each slice \( n \) is calculated as the sum of transmission and execution times for all WD's offloading their tasks in slice \( n \):

\begin{equation}
C_{(n)}(\boldsymbol{\delta}, \mathbb{R}_\omega, \mathbb{R}^n_{\phi_a}, \mathbb{R}^n_{\phi_j}) = \sum_{e \in \mathcal{E}} \sum_{i \in O_{e,s}(\boldsymbol{\delta})} \frac{\mathcal{T}_{i,e}^n}{\omega_e^n \phi_{i,e}^{n}} 
\end{equation}

Here, \( \omega_e^n = 1 \) if \( e \) is a computing resource.

Finally, the system cost is expressed as the sum of individual WD costs and slice costs:

\begin{equation}
\begin{split}
C(\boldsymbol{\delta}, \mathbb{R}_\omega, \mathbb{R}_{\phi_a}, \mathbb{R}_{\phi_j}) = & \sum_{i \in I} C_i(\boldsymbol{\delta}, \mathbb{R}_\omega, \mathbb{R}^n_{\phi_a}, \mathbb{R}^n_{\phi_j}) \\
& + \sum_{n \in N} C_{(n)}(\boldsymbol{\delta}, \mathbb{R}_b, \mathbb{R}^n_{\phi_a}, \mathbb{R}^n_{\phi_j})
\end{split}
\end{equation}

This system cost formulation accounts for both local and offloaded computation across slices.

\section{Optimization Problem Formulation}

Our goal is to minimize the system cost by finding the optimal decision from vector $\textbf{d}$ of offloading decisions. From the above analytical results, the problem can be expressed mathematically as a mixed-integer non linear programming (MINLP) problem and is formulated as follows:

\begin{equation}
\label{Optimal equation}
\begin{aligned}
& \underset{\boldsymbol{\delta}, \mathbb{R}_\omega, \mathbb{R}_{\phi_a} , \mathbb{R}_{\phi_j}}{\text{min}} \quad C(\boldsymbol{\delta}, \mathbb{R}_\omega, \mathbb{R}_{\phi_a} , \mathbb{R}_{\phi_j} ) 
\end{aligned}
\end{equation}

s.t.

\begin{align}
\text{} & \quad \sum_{\delta \in \Delta_i} {I}(\delta_i, \delta)=1, \quad i \in \mathcal{I} \tag{13a} \label{eq:C1} \\
\text{} & \quad T^{ex}_{i,j}(\boldsymbol{\delta}, \mathbb{R}^{n}_{\phi_{j}}) \leq T^{ex}_{i}, \quad i \in \mathcal{I} \tag{13b} \label{eq:C2} \\
\text{} & \quad \frac{F_i,j^n}{\mathcal{L}_{i,n}} - \sum_{i \in \mathcal{I}} \delta_{ij}\alpha_i > 0, \quad  j \in \overline{N}  \tag{13c} \label{eq:C3} \\
\text{} & \quad \sum_{n \in N} \omega_{a}^{n} \leq 1, \quad a \in \mathcal{A} \tag{13d} \label{eq:C4} \\
\text{} & \quad \sum_{i \in  O_{e,n}(\boldsymbol{\delta})}\phi_{i,e}^{n} \leq 1, \quad n \in \mathcal{N}, e \in \mathcal{E} \tag{13e}
\label{eq:C5} \\
\text{} & \quad \omega_{a}^{n} \geq 0, \quad a \in \mathcal{A} \tag{13f}
\label{eq:C6} \\
\text{} & \quad \phi_{i,e}^{n} \geq 0, \quad n \in \mathcal{N}, e \in \mathcal{E} \tag{13g} \label{eq:C7} 
\end{align}

The constraint \ref{eq:C1} ensures that each WD performs a computation locally or offloads its task to exactly one logical resource on the edge $(a, j, n) \,|\, a \in \mathcal{A}, j \in \overline{N}, n \in \mathcal{N}$. The constraint \ref{eq:C2} indicates that the task completion time when offloading is not greater than when computing locally. The constraint \ref{eq:C3} forces the average service rate of the edge nodes to be higher than the average task arrival rate in the case of offloading. Constraints \ref{eq:C4} \& \ref{eq:C5} enforce limitations on the amount of radio resources that can be provided to an AP in each slice, and the amount of computing resources of an edge node that can be provided to each WD in each slice. Constraints \ref{eq:C6} \& \ref{eq:C7} are non-negativity constraints that ensure the values of the provisioning coefficients are non-negative.

\section{Distributed Hierarchical DeepSet-S model for resource management and task offloading}\label{ML-model}

The MINLP problem formulated in equation \ref{Optimal equation} corresponds to the Generalized Assignment Problem (GAP) \cite{cattrysse1992survey}, which is an NP-hard problem in the combinatorial optimization literature. Due to its complexity, resolving it by using a standard optimization solver is not feasible; it would take a long time to arrive at the optimal solution. Therefore, a more dynamic solution is needed to solve the MINLP problem efficiently, even at the expense of a suboptimal solution.

For this purpose, consistent with the recent studies \cite{11050417,lia2022}, we decompose the MINLP into a sequence of $\mathcal{N}+2$ optimization subproblems, which are solved sequentially using machine learning (ML)-based techniques.

\begin{figure}[!ht]
  \centering
  \includegraphics[width= 1.07\linewidth]{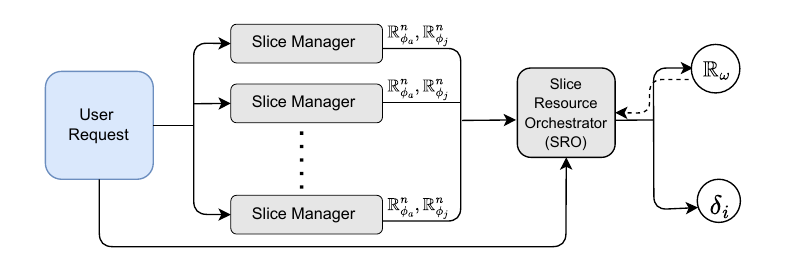}
  \caption{Distributed Hierarchical AI Model}
  \label{hierarchical}
\end{figure}

\textbf{Subproblem 1 (SP1)}:
As a first step in the decomposition, consider the problem of finding an optimal collection $\mathbb{R}_{\phi_a}^n, \mathbb{R}_{\phi_j}^n$ of slice resource allocation policies for fixed offloading decision vector $\boldsymbol{\delta}$ and interslice policy $\mathbb{R}_\omega$, for which constraint \ref{eq:C2} can be satisfied. Then the solution to the problem is given by:

\begin{equation}
\min_{\mathbb{R}_{\phi_a} , \mathbb{R}_{\phi_j}} \sum_{n \in \mathcal{N}} \sum_{e \in \mathcal{E}} \sum_{i \in O_{e,n}(\delta)} \frac{\mathcal{T}_{i,e}^n}{\phi_{i,e}^n} 
\end{equation}
\begin{align}
\text{} & \quad \textbf{st.} (\ref{eq:C2}),(\ref{eq:C3}), (\ref{eq:C5}), (\ref{eq:C7}) \tag{14a} \label{eq:C14a} 
\end{align}

\begin{figure}[!ht]
  \centering
  \includegraphics[width= 1\linewidth]{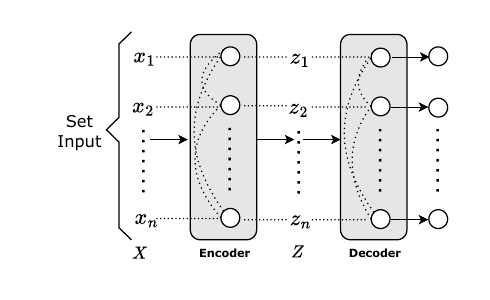}
  \caption{DeepSets-S Model}
  \label{SP1_transformer}
\end{figure}


\textbf{Subproblem 2 (SP2)}:
As a second step in the decomposition, consider the problem of finding an optimal inter-slice radio resource allocation policy $\mathbb{R}_\omega$ for fixed offloading decision vector $\boldsymbol{\delta}$ and the optimal collection $\mathbb{R}_{\phi_a}^n, \mathbb{R}_{\phi_j}^n$ of the slices policies, for which constraint \ref{eq:C2} can be satisfied.

\begin{equation}
\min_{\mathbb{R}_\omega} \sum_{n \in \mathcal{N}} \sum_{a \in \mathcal{A}} \frac{1}{\omega_{a}^{n}} \left( \sum_{j \in O_{a,n}(\delta)} \tau_{i,a}^n \right)^{2}
\end{equation}
\begin{align}
\text{} & \quad \textbf{st.} (\ref{eq:C2}),(\ref{eq:C3}), (\ref{eq:C4}), (\ref{eq:C6}) \tag{15a} \label{eq:C14a} 
\end{align}

\textbf{Subproblem 3 (SP3)}:
    As the final step, let us consider the problem of finding the optimal collection $(\mathbb{R}_\omega, \mathbb{R}_{\phi_a}^n, \mathbb{R}_{\phi_j}^n)$ of resource allocation policies first, and finding an optimal offloading decision vector $\boldsymbol{\delta}$ second. 

\begin{equation}\label{eqn_16}
\begin{aligned}
& \underset{\boldsymbol{\delta}}{\text{min}}\underset{{\mathbb{R}_\omega, \mathbb{R}_{\phi_a}^n , \mathbb{R}_{\phi_j}^n}}{\text{min}} \quad C(\boldsymbol{\delta}, \mathbb{R}_\omega, \mathbb{R}_{\phi_a}^n , \mathbb{R}_{\phi_j}^n ) 
\end{aligned}
\end{equation}
\begin{align}
\text{} & \quad \textbf{st.} (\ref{eq:C1}) - (\ref{eq:C7}) \tag{16a} \label{eq:C17a} 
\end{align}



     


\begin{table}[t]
\caption{Symbol descriptions}
    \label{tab:notation}
    \centering
    \begin{tabular}{|p{1.22cm}|p{6.13cm}|}
        \hline
        \textbf{Symbol} & \textbf{Description} \\
        \hline
        $\mathcal{N}$ & Set of application slices \\
        $\mathcal{I}$ & Set of Wireless Devices (WD) \\
        $\mathcal{A}$ & Set of Access Points (AP) \\
        $\mathcal{C}$ & Set of COIN nodes \\
        $\mathcal{M}$ & Set of MEC nodes \\
        $S_i$ & Input size of task $i$ generated by WD $i$ \\
        $R_i$ & Expected number of instructions required for task $i$ \\
        $L_i$ & Number of instructions to execute task $i$ locally \\
        $L_{i,n}$ & Number of instructions to execute task $i$ in slice $n$ \\
        $\delta_i$ & Decision for task $i$, $d_i \in D_i$ \\
        $\Delta_i$ & Set of decisions for task $i$ \\
        $\mathbb{R}_{\omega}$ & Interslice radio resource allocation policy \\
        $\omega_{n}^{a}$ & Interslice radio resource provisioning coefficient \\
        $\mathbb{R}^{n}_{\phi_a}$ & Intraslice radio resource allocation policy \\
        $\phi_{i,a}^{n}$ & Intraslice radio resource provisioning coefficient \\
        $\mathbb{R}^{n}_{\phi_{j}}$ & Intraslice computing resource allocation policy \\
        $\phi_{i,j}^{n}$ & Intraslice computing resource provisioning coefficient \\
        $F_{j}^{n}$ & Computing capability of node $j$ in slice $n$ \\
        $F_{i,j}^{n}(\delta, P_{\phi}^{n})$ & Computing capability allocated to WD $i$ in slice $n$ \\
        $\alpha_i$ & Arrival rate of requests for task $i$ \\
        $F_{i}^{l}$ & Local computing capability of WD $i$ \\
        \hline
    \end{tabular}
\end{table}

In what follows, we propose a hierarchical deepset model in Figure \ref{hierarchical} that solves SP1 for each slice, solves SP2 at the network level, followed by SP3 to find the optimal offloading decision vector.  

The overall architecture follows a similar encoder to ensure consistency across subproblems, while each subproblem is equipped with a dedicated decoder tailored to its objective. The encoder transforms \emph{set} of WD indexed by \(i\in \mathcal{I}\), where the size of \(\mathcal{I}\) varies from set to set. Each WD $i$ contributes a feature vector \(x_i\in\mathbb{O}^{F}\). We combine the WDs features and network status information into a single stacked input matrix
\begin{equation}
X \;=\; 
\begin{bmatrix}
x_1^\top\\
\vdots\\
x_\mathcal{I}^\top
\end{bmatrix}
\in \mathbb{O}^{\mathcal{I}\times F},
\label{eq:stackedX}
\end{equation}
which is a set-valued representation: the row order carries no semantic meaning and may be permuted without changing the underlying set.

Since the WD count of \(I\) differs across sets, we pad each \(X\) to a common set size \(\mathcal{I}_{\max}\) within a mini-batch. Let \(\mathrm{pad}(\cdot)\) append all-zero rows as needed; the padded matrix is
\begin{equation}
\bar{X} \;=\; \mathrm{pad}(X) \in \mathbb{O}^{\mathcal{I}_{\max}\times F}.
\label{eq:paddedX}
\end{equation}
We define \(\bar{X}\) with a binary mask that indicates which rows correspond to real WD and which are padding:
\begin{equation}
m \;=\; (m_1,\dots,m_{\mathcal{I}_{\max}})^\top \in \{0,1\}^{\mathcal{I}_{\max}}
\end{equation}
\begin{equation}
m_i \;=\; 
\begin{cases}
1, & \text{$i$ = real WD$i$} \\
0, & \text{$i$ = padding}
\end{cases}
\label{eq:mask}
\end{equation}
This mask is used throughout the model to: (i) exclude padded rows from set summaries and attention, and (ii) prevent padded rows from receiving probability mass.

We standardize features to stabilize optimization. Let \(\mu\in\mathbb{O}^{F}\) and \(\sigma\in\mathbb{O}^{F}\) denote the per-feature mean and standard deviation computed over the \emph{training} set. We transform each padded set by
\begin{equation}
\tilde{X}_{i,:} \;=\; (\bar{X}_{i,:} - \mu) \oslash \sigma \;\in\; \mathbb{O}^{F},
\qquad i=1,\dots,\mathcal{I}_{\max}
\label{eq:standardize}
\end{equation}
where \(\oslash\) denotes element-wise division; padded set remain zeros after standardization.

Given \((\tilde{X},m)\) for a set with \(\mathcal{I}\) real WD $i$ and a fixed set of allocation policy \(\mathbb{R}\) indexed by \(e\in\{1,\dots,\mathcal{E}\}\), our objective is to learn a permutation-equivariant mapping that outputs, for each \(\mathbb{R}\), a probability distribution over \(I\) \emph{plus} a slack (unused capacity) entry, summing to one per \(\mathbb{R}\). Formally,
\begin{equation}
f_{\theta}:\;\mathbb{O}^{\mathcal{I}_{\max}\times F}\times\{0,1\}^{\mathcal{I}_{\max}}
\;\longrightarrow\;
[0,1]^{(\mathcal{I}+1)\times E}
\label{eq:goal}
\end{equation}

\begin{equation}
\sum_{u=0}^{\mathcal{I}} P_{u,e} \;=\; 1 \;\; \forall e
\label{eq:goal}
\end{equation}

where \(u=1,\dots,\mathcal{I}\) index real users and \(u=0\) denotes slack. The mapping must respect user-set permutation symmetry and operate for arbitrary \(\mathcal{I}\le \mathcal{I}_{\max}\).

\subsection{DeepSet Encoder}

Given the standardized and padded user matrix \(\tilde{X}\in\mathbb{O}^{\mathcal{I}_{\max}\times F}\) and mask \(m\in\{0,1\}^{\mathcal{I}_{\max}}\) from \eqref{eq:paddedX}--\eqref{eq:standardize}, the encoder produces a \emph{contextual} embedding for each WD row while remaining permutation–equivariant in the WD dimension. The encoder follows a DeepSets-style architecture with a row-wise embedding network \(\phi\) and a context-fusing network \(\rho\).

A shared multilayer perceptron (MLP) \(\phi:\mathbb{O}^{F}\!\to\!\mathbb{R}^{H}\) maps each user’s features to a hidden vector:
\begin{equation}
h_i \;=\; \phi\!\big(\tilde{X}_{i,:}\big) \in \mathbb{O}^{H}, 
\qquad i=1,\dots,\mathcal{I}_{\max}
\label{eq:phi}
\end{equation}
Stacking the row-wise outputs yields \(H_{\text{row}}=\big[h_1^\top;\dots;h_{\mathcal{I}_{\max}}^\top\big]\in\mathbb{O}^{\mathcal{I}_{\max}\times H}\)

To capture global context from the set of (real) WD, we compute a masked mean over the hidden vectors, excluding padded rows:
\begin{equation}
c \;=\; 
\frac{\sum_{i=1}^{\mathcal{I}_{\max}} m_i\, h_i}{\sum_{i=1}^{\mathcal{I}_{\max}} m_i \;+\; \varepsilon}
\;\in\; \mathbb{O}^{H},
\label{eq:context}
\end{equation}
where \(\varepsilon>0\) avoids division by zero in degenerate cases. The same context vector \(c\) is broadcast to every row.

\paragraph*{Fuse local and global information.}
Each user combines its local representation \(h_i\) with the global context \(c\) via concatenation followed by a second shared MLP \(\rho:\mathbb{O}^{2H}\!\to\!\mathbb{O}^{D}\):
\begin{equation}
z_i \;=\; \rho\!\big([\,h_i \,;\, c\,]\big) \in \mathbb{O}^{D},
\qquad i=1,\dots,\mathcal{I}_{\max}.
\label{eq:rho}
\end{equation}
Stacking the resulting vectors forms the encoder output matrix
\begin{equation}
Z \;=\;
\begin{bmatrix}
z_1^\top\\
\vdots\\
z_{\mathcal{I}_{\max}}^\top
\end{bmatrix}
\in \mathbb{O}^{\mathcal{I}_{\max}\times D}.
\label{eq:Z}
\end{equation}
Rows with \(m_i=0\) correspond to padded users; these rows are carried forward but ignored by all masked operations downstream.

Let \(\Pi\in\{0,1\}^{\mathcal{I}_{\max}\times \mathcal{I}_{\max}}\) be a permutation matrix acting on WD rows, and define \(\Pi m\) as the permuted mask. Because \(\phi\) and \(\rho\) act row-wise and the summary in \eqref{eq:context} is a symmetric (mean) operator, the encoder is permutation–equivariant:
\begin{equation}
\mathrm{Enc}(\Pi \tilde{X}, \Pi m) \;=\; \Pi\, \mathrm{Enc}(\tilde{X}, m).
\label{eq:equiv}
\end{equation}

\subsection{Decoder}

For SP1, given the encoder output \(Z\in\mathbb{O}^{\mathcal{I}_{\max}\times D}\) in \eqref{eq:Z} and the mask \(m\in\{0,1\}^{\mathcal{I}_{\max}}\), the decoder maps each WD $i$ embedding to per–resource scores and converts them into valid WD–wise probability distributions for every \(\mathbb{O}\), augmented with a slack term to capture unused capacity. Let the resource index set be \(\mathcal{E}=\{1,\dots,E\}\).

For each \(e\in\mathcal{E}\), the decoder maintains learned parameters
\begin{equation}
w_e\in\mathbb{O}^{D},\qquad b_e\in\mathbb{O},\qquad s_e\in\mathbb{O},
\label{eq:dec-params}
\end{equation}
shared across all users. The WD logits and the slack logit are
\begin{equation}
l_{i,e} \;=\; w_e^\top z_i + b_e,\quad i=1,\dots,\mathcal{I}_{\max},
\quad
l_{0,e} \;=\; s_e
\label{eq:logits}
\end{equation}
Stacking over WDs yields \(A\in\mathbb{O}^{\mathcal{I}_{\max}\times E}\) with entries \([A]_{i,e}=a_{i,e}\).

To ignore padded rows during normalization, we define augmented masked logits \(\tilde{L}\in\mathbb{O}^{(\mathcal{I}_{\max}+1)\times E}\) by
\begin{equation}
\tilde{l}_{i,e} \;=\;
\begin{cases}
l_{i,e}, & m_i=1,\\
-\infty, & m_i=0,
\end{cases}
\quad i=1,\dots,\mathcal{I}_{\max},
\quad
\tilde{l}_{0,e}=l_{0,e}
\label{eq:masked-logits}
\end{equation}

For each \(e\in\mathcal{E}\), we normalize along the augmented WD axis:
\begin{equation}
\begin{split}
p_{i,e}
\;=\;
\frac{\exp\!\big(\tilde{l}_{i,e}\big)}
{\displaystyle \sum_{j\in\{0,1,\dots,\mathcal{I}_{\max}\}} \exp\!\big(\tilde{l}_{j,e}\big)},
\label{eq:user-softmax}
\end{split}
\end{equation}
Collecting all \(p_{i,e}\) gives
\begin{equation}
P \;\in\; [0,1]^{(\mathcal{I}_{\max}+1)\times E},
\qquad
\sum_{i=0}^{\mathcal{I}_{\max}} p_{i,e} \;=\; 1,\;\; \forall e\in\mathcal{E}.
\label{eq:P-stochastic}
\end{equation}
The rows \(i=1,\dots,\mathcal{I}_{\max}\) correspond to actual WDs; the row \(i=0\) corresponds to slack.

By \eqref{eq:P-stochastic}, each resource satisfies
\begin{equation}
\sum_{i=1}^{\mathcal{I}_{\max}} \hat{y}_{i,e} \;=\; \,\bigl(1-p_{0,e}\bigr)\;\le\; 1,
\qquad \forall e\in\mathcal{E},
\label{eq:sum-constraint}
\end{equation}

 thus enforcing constraint \eqref{eq:C5} as proved in Lemma~\ref{lem:13e}. Moreover, since \eqref{eq:logits} is row-wise and the normalization \eqref{eq:user-softmax} acts symmetrically across WD rows (modulo masking), the decoder is permutation–equivariant in the WD dimension; combined with \eqref{eq:equiv}, the overall encoder–decoder mapping preserves user permutations.

Let $Y\in[0,1]^{(\mathcal{I}_{\max}+1)\times E}$ denote the soft labels obtained by scaling the outputs to $[0,1]$. For each $e\in\mathcal E$, set
\begin{equation}
y_{0,e}\;=\;\max\!\Big\{0,\;1-\sum_{i=1}^{\mathcal{I}} y_{i,e}\Big\},
\qquad
\sum_{i=0}^{\mathcal{I}_{\max}} y_{i,e}\;=\;1,
\label{34}
\end{equation}
thus, the SP1 training objective is the masked cross-entropy on the augmented simplex:
\begin{equation}
\mathcal{L}_{\mathrm{SP1}}
\;=\;
-\sum_{e\in\mathcal E}\;\sum_{i=0}^{\mathcal{I}_{\max}}
\bar m_i\, y_{i,e}\,\log p_{i,e}.
\label{35}
\end{equation}

\begin{lemma}[]\label{lem:13e}
For every slice $n$ and resource $e$, the allocations of resources per-WD $\{\phi^{n}_{i,e}\}_{i \in \mathcal{I}}$ produced by SP1 satisfy
\[
\sum_{i \in \mathcal{I}} \phi^{n}_{i,e} \le 1.
\]
\end{lemma}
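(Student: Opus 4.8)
The plan is to read the bound directly off the softmax normalization that already defines the SP1 decoder, so the argument is essentially bookkeeping around masking and the slack row rather than a genuine analytic estimate. First I would fix an arbitrary slice $n$ and resource $e\in\mathcal{E}$ and recall that, by construction, the decoder output column $(p_{i,e})_{i=0}^{\mathcal{I}_{\max}}$ is the softmax of the augmented masked logits $\tilde{l}_{\cdot,e}$ from \eqref{eq:masked-logits}. Since a softmax always returns a point on the simplex, equation \eqref{eq:P-stochastic} supplies $\sum_{i=0}^{\mathcal{I}_{\max}} p_{i,e}=1$ with every $p_{i,e}\ge 0$; this single structural fact is all I need.

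The key step is to dispose of the padded rows and isolate the slack entry. For any padded index ($m_i=0$) the masked logit is $\tilde{l}_{i,e}=-\infty$, so $\exp(\tilde{l}_{i,e})=0$ and hence $p_{i,e}=0$. Consequently the sum over the full padded range collapses onto the real WDs, i.e. $\sum_{i=1}^{\mathcal{I}_{\max}}p_{i,e}=\sum_{i\in\mathcal{I}}p_{i,e}$. Identifying the per-WD allocation produced by SP1 with these probabilities, $\phi^{n}_{i,e}=\hat{y}_{i,e}=p_{i,e}$ for each real WD $i$, and peeling off the slack entry $i=0$ in \eqref{eq:P-stochastic} gives $\sum_{i\in\mathcal{I}}\phi^{n}_{i,e}=1-p_{0,e}$, which is exactly the identity already recorded in \eqref{eq:sum-constraint}.

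It then remains only to invoke non-negativity of the slack mass: because $p_{0,e}\ge 0$, we obtain $\sum_{i\in\mathcal{I}}\phi^{n}_{i,e}=1-p_{0,e}\le 1$, as claimed. I would close by observing that the bound over the entire WD set $\mathcal{I}$ is in fact stronger than the capacity constraint \eqref{eq:C5}: since $O_{e,n}(\boldsymbol{\delta})\subseteq\mathcal{I}$ and all allocations are non-negative, $\sum_{i\in O_{e,n}(\boldsymbol{\delta})}\phi^{n}_{i,e}\le\sum_{i\in\mathcal{I}}\phi^{n}_{i,e}\le 1$, so \eqref{eq:C5} follows at once.

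I do not anticipate a substantive obstacle, as the result is a guarantee baked into the softmax layer rather than something that must be argued from the optimization. The only points requiring care are treating the $-\infty$ masking in \eqref{eq:masked-logits} as a literal post-exponential zero (not a numerical surrogate) and cleanly separating the slack row $i=0$ from the real-WD rows; once both are handled, the proof reduces to citing the softmax normalization \eqref{eq:P-stochastic} and the sign of $p_{0,e}$.
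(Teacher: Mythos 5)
Your proof is correct and follows essentially the same route as the paper's: read the bound off the augmented softmax normalization, identify $\phi^{n}_{i,e}$ with the per-WD probabilities, and conclude $\sum_{i\in\mathcal{I}}\phi^{n}_{i,e}=1-p_{0,e}\le 1$ from non-negativity of the slack mass. Your additional handling of the padded ($-\infty$-masked) rows and the observation that the bound over all of $\mathcal{I}$ dominates the constraint over $O_{e,n}(\boldsymbol{\delta})$ are minor refinements the paper leaves implicit, not a different argument.
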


\begin{proof}
The SP1 decoder generates, for each resource $e$ in slice $n$, an augmented softmax distribution over the index set $\{0\} \cup \mathcal{I}$, where index $0$ denotes a slack option corresponding to unused capacity. 
Let the resulting probabilities be $p_0, p_1, \ldots, p_{|\mathcal{I}|}$ with $p_k \ge 0$ and $\sum_k p_k = 1$. 
Define $\phi^{n}_{i,e} = p_i$ for each user $i$. Then,
\[
\sum_{i \in \mathcal{I}} \phi^{n}_{i,e}
= \sum_{i \in \mathcal{I}} p_i
= 1 - p_0
\le 1.
\]
Hence, the total resource share allocated to users within each slice never exceeds the available capacity of resource $e$, satisfying constraint (13e). \qedhere
\end{proof}

For SP2, the outputs are set-level and identical for all WDs in that set. We therefore first aggregate the encoder states by a masked mean to obtain a single set embedding
\begin{equation}
g \;=\; \frac{\sum_{i=1}^{\mathcal{I}_{\max}} m_i\, z_i}{\sum_{i=1}^{\mathcal{I}_{\max}} m_i + \varepsilon} \;\in\; \mathbb{O}^{D}
\label{36}
\end{equation}
The SP2 decoder maintains learned parameters
$w_e \in \mathbb{O}^{D},\quad b_e \in \mathbb{O} \ \ \text{for } e\in\mathcal E,\quad s \in \mathbb{O} \ $, and forms logits from the pooled representation:
\begin{equation}
a_e \;=\; w_e^\top g + b_e,\quad e\in\mathcal E,\qquad a_0 \;=\; s \
\end{equation}
A softmax over the augmented slice axis $\{0\}\cup\mathcal E$ yields a normalized distribution
\begin{equation}
p_k \;=\; \frac{\exp(a_k)}{\sum_{\ell\in\{0\}\cup\mathcal E}\exp(a_\ell)},\quad k\in\{0\}\cup\mathcal E
\end{equation}
with predicted slice allocations $\hat y_e = p_e$ for $e\in\mathcal E$ and unused capacity $p_0$. By construction,
\begin{equation}
\sum_{e\in\mathcal E} \hat y_e \;=\; 1 - p_0 \;\le\; 1
\label{39}
\end{equation}
as proved in Lemma~\ref{lem:13d}, \eqref{39} enforces constraint \eqref{eq:C4}. Training uses soft labels $y=(y_e)_{e\in\mathcal E}\in[0,1]^E$, augmented with a slack target $y_0=\max\!\big\{0,\,1-\sum_{e\in\mathcal E} y_e\big\}$. The loss is the cross-entropy on the augmented simplex:
\begin{equation}
\mathcal L_{\mathrm{SP2}} \;=\; - \sum_{k\in\{0\}\cup\mathcal E} y_k \log p_k.
\end{equation}
Since \eqref{36}--\eqref{39} depend only on the pooled set embedding $g$, the SP2 outputs are identical for every WD in the set, as required, while the shared encoder and masked operations preserve permutation equivariance in the WD dimension.

\begin{lemma}[]\label{lem:13d}
For each Access Point $a$, the slice shares $\{\omega_a^{n}\}_{n \in \mathcal{N}}$ produced by SP2 satisfy
\[
\sum_{n \in \mathcal{N}} \omega_a^{n} \le 1.
\]
\end{lemma}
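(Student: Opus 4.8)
The plan is to mirror the argument already used for Lemma~\ref{lem:13e}, since the SP2 decoder inherits the same augmented-softmax structure as the SP1 decoder, with the normalization axis simply reinterpreted as the set of slices (plus a slack entry) rather than the set of WDs. First I would fix an arbitrary access point $a \in \mathcal{A}$ and identify the augmented index set over which the SP2 softmax in \eqref{39} normalizes as $\{0\} \cup \mathcal{N}$, where index $0$ is the slack (unused-capacity) entry and each $n \in \mathcal{N}$ corresponds to the inter-slice share $\omega_a^n$ of AP $a$'s radio resource.

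Next I would invoke the defining property of the softmax: with the slack logit $a_0 = s$ and the per-slice logits $a_n = w_n^\top g + b_n$ for $n \in \mathcal{N}$, the resulting values $p_0$ and $(p_n)_{n \in \mathcal{N}}$ are nonnegative and sum to exactly one over the augmented axis. Setting $\omega_a^n := p_n$ as in the decoder construction, summation over the real slices gives $\sum_{n \in \mathcal{N}} \omega_a^n = \sum_{n \in \mathcal{N}} p_n = 1 - p_0$. Since $p_0 \ge 0$ by nonnegativity of the softmax, this yields $\sum_{n \in \mathcal{N}} \omega_a^n \le 1$, which is precisely constraint \eqref{eq:C4}. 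No computation beyond the simplex identity is required.

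The only point demanding genuine care — and the closest thing to an obstacle — is the notational reconciliation between the generic resource index $e \in \mathcal{E}$ used in the SP2 equations \eqref{36}--\eqref{39} and the slice index $n \in \mathcal{N}$ appearing in the lemma statement and in \eqref{eq:C4}. I would make explicit that, for SP2, the decoder is instantiated per access point and its output simplex ranges over slices, so that $\hat{y}_e$ in \eqref{39} is exactly $\omega_a^n$ for the matching slice. Once this identification is stated, the inequality is immediate, and the result holds for every $a \in \mathcal{A}$ by applying the same argument independently to each AP's pooled logits.
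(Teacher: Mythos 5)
Your proposal is correct and follows essentially the same argument as the paper's own proof: both identify the SP2 output as an augmented softmax over $\{0\}\cup\mathcal{N}$, set $\omega_a^{n}:=p_n$, and conclude $\sum_{n\in\mathcal{N}}\omega_a^{n}=1-p_0\le 1$ from nonnegativity of the slack probability. Your extra remark about reconciling the index $e\in\mathcal{E}$ in the SP2 equations with the slice index $n\in\mathcal{N}$ is a fair clarification of the paper's notation but does not change the substance of the argument.
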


\begin{proof}
SP2’s decoder outputs an augmented softmax over the set $\{0\}\cup\mathcal{N}$, where index $0$ is a slack option. Let the softmax probabilities be $p_0, p_1, \ldots, p_{|\mathcal{N}|}$ with $p_k \ge 0$ and $\sum_k p_k = 1$. Define the usable slice shares as $\omega_a^{n} := p_n$ for each slice $n \in \mathcal{N}$. Then
\[
\sum_{n \in \mathcal{N}} \omega_a^{n}
= \sum_{n \in \mathcal{N}} p_n
= 1 - p_0
\le 1.
\]
Hence, the total per-AP allocation across slices never exceeds the AP’s full capacity, and (13d) holds. \qedhere
\end{proof}

Finally, for SP3, each WD \(i\) is assigned an offloading decision \(\delta_i \in \Delta_i\), which is either local execution or offloading through AP \(a\) to node \(j\) in slice \(n\).
Given the encoder output in (26), we use one joint head that scores the exact decision in \(\Delta_i\).
All heads share parameters across WDs and act row-wise.
With \(z_i \in \mathbb{O}^{D}\), the WD logits and probabilities are
\begin{equation}
\ell_i \;=\; W^{(\delta)} z_i + b^{(\delta)} \;\in\; \mathbb{O}^{|\Delta_i|}, 
\end{equation}
\begin{equation}
p_i \;=\; \mathrm{softmax}(\ell_i),\;\; \mathbf{1}^\top p_i = 1,
\end{equation}
and the predicted decision is the joint argmax
\begin{equation}
\hat{\delta}_i \;=\; \arg\max_{\delta \in \Delta_i}\,[p_i]_{\delta},
\end{equation}
which maps to a specific decision \((n,a,j)\) when \(\hat{\delta}_i\neq\text{local}\) as proved in Lemma \ref{lem:13a}.

To improve sample efficiency, we also include:
(i) a binary local/offload head
\begin{equation}
q_i \;=\; \sigma\!\big(w_{\mathrm{bin}}^\top z_i + b_{\mathrm{bin}}\big) \;\in\; (0,1),
\end{equation}
and (ii) separate $n/a/j$ heads
\begin{subequations}
\begin{align}
n_i \;&=\; W^{(n)} z_i + b^{(n)} \;\in\; \mathbb{O}^{|\overline{N}|}, 
& p_i^{(n)} &= \mathrm{softmax}(n_i) \\[-1pt]
a_i \;&=\; W^{(a)} z_i + b^{(a)} \;\in\; \mathbb{O}^{|A|}, 
& p_i^{(a)} &= \mathrm{softmax}(a_i) \\[-1pt]
j_i \;&=\; W^{(j)} z_i + b^{(j)} \;\in\; \mathbb{O}^{|J|}, 
& p_i^{(j)} &= \mathrm{softmax}(j_i)
\end{align}
\end{subequations}

We use the joint argmax over \(\Delta_i\); an equivalent rule is

\begin{equation}
\hat{\delta}_i =
\begin{cases}
\text{local}, & q_i < \tfrac{1}{2},\\[4pt]
(\arg\max p_i^{(n)},\; \arg\max p_i^{(a)},\\
\qquad \arg\max p_i^{(j)}), & \text{otherwise.}
\end{cases}
\end{equation}

Let \(y_i^{\delta}\in\{0,1\}^{|\Delta_i|}\) be the joint label, \(y_i^{\mathrm{bin}}\in\{0,1\}\) the local/offload flag, and \(y_i^{(n)}\), \(y_i^{(a)}\), \(y_i^{(j)}\) the $n/a/j$ labels for offloaded WDs.
With the padding mask \(m_i\) from \eqref{eq:mask}, we minimize
\begin{equation}
\mathcal{L}_{\text{joint}}
=\;-\sum_i m_i\,\big\langle y_i^{\delta},\,\log p_i\big\rangle
\end{equation}
\begin{equation}
\mathcal{L}_{\text{bin}}
=\;-\sum_i m_i\!\left(y_i^{\mathrm{bin}}\log q_i + (1-y_i^{\mathrm{bin}})\log(1-q_i)\right)
\end{equation}
\begin{equation}
\begin{split}
\mathcal{L}_{\text{n,a,j}}
=\;-\sum_i m_i\,y_i^{\mathrm{bin}}\Big(\big\langle y_i^{(n)},\,\log p_i^{(n)}\big\rangle \\
+\big\langle y_i^{(a)},\,\log p_i^{(a)}\big\rangle
+\big\langle y_i^{(j)},\,\log p_i^{(j)}\big\rangle\Big)
\end{split}
\end{equation}
and combine them (weights \(\lambda_{\mathrm{bin}},\lambda_{\mathrm{fac}}\ge 0\); set to 1 unless stated otherwise):
\begin{equation}
\mathcal{L}_{\text{SP3}}
\;=\;
\mathcal{L}_{\text{joint}} + \lambda_{\mathrm{bin}}\,\mathcal{L}_{\text{bin}} + \lambda_{\mathrm{fac}}\,\mathcal{L}_{\text{n,a,j}}.
\end{equation}
Because all mappings above act row-wise with shared parameters and all normalizations respect the mask, the decoder is permutation–equivariant in the WD dimension.

\begin{lemma}[]\label{lem:13a}
Each WD $i$ selects exactly one decision variable $\delta_i$ from $\Delta_i$ .
\end{lemma}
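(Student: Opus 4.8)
The plan is to read the claim as the learned-decoder analogue of constraint~\eqref{eq:C1}, namely that the SP3 head induces a well-defined map into $\Delta_i$ whose image is a single element. First I would invoke the defining property of the joint softmax: by construction $p_i=\mathrm{softmax}(\ell_i)$ satisfies $[p_i]_\delta\ge 0$ for every $\delta\in\Delta_i$ and $\mathbf{1}^\top p_i=\sum_{\delta\in\Delta_i}[p_i]_\delta=1$, so $p_i$ is a probability vector on the finite index set $\Delta_i=\{i\}\cup\{(a,j,n)\,|\,a\in\mathcal{A},j\in\overline{N},n\in\mathcal{N}\}$. The selection rule $\hat\delta_i=\arg\max_{\delta\in\Delta_i}[p_i]_\delta$ then returns an index attaining the maximum of finitely many reals; such a maximizer always exists, and I would fix a deterministic tie-break (e.g.\ lexicographic on $(a,j,n)$, with local ordered first) so that the $\arg\max$ is single-valued. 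This yields exactly one $\hat\delta_i\in\Delta_i$, whence $I(\hat\delta_i,\delta)=1$ for that single $\delta$ and $0$ otherwise, i.e.\ $\sum_{\delta\in\Delta_i}I(\hat\delta_i,\delta)=1$, which is precisely constraint~\eqref{eq:C1}.

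Next I would verify that the equivalent factored rule selects the same single decision, since the decoder exposes a binary gate $q_i$ together with independent $n/a/j$ heads. I would split on the gate: if $q_i<\tfrac12$ the rule returns $\hat\delta_i=\text{local}=\{i\}$, a single element; otherwise it returns the triple assembled from $\arg\max p_i^{(n)}$, $\arg\max p_i^{(a)}$, and $\arg\max p_i^{(j)}$. The key step is to show this triple lands in exactly one element of $\Delta_i$: because each of $p_i^{(n)},p_i^{(a)},p_i^{(j)}$ is itself a softmax (hence a probability vector on $\mathcal{N}$, $\mathcal{A}$, $\overline{N}$ respectively), each independent $\arg\max$ returns a single coordinate under the same tie-break, and the Cartesian product of these three singletons is one well-defined point $(a,j,n)\in\mathcal{A}\times\overline{N}\times\mathcal{N}$, which by definition is a single offloading decision in $\Delta_i$. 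Mutual exclusivity between the local and offloading branches is enforced by the gate threshold ($q_i<\tfrac12$ versus $q_i\ge\tfrac12$), so the two cases never fire simultaneously and the map is total and single-valued on $\Delta_i$.

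The main obstacle I anticipate is not the counting argument---softmax normalization makes ``sums to one'' immediate---but securing single-valuedness and the consistency of the two selection rules. Concretely, $\arg\max$ is only a partial function when ties occur, so the argument must pin down a canonical tie-breaking convention and apply it uniformly to the joint head and to all three factored heads; otherwise the factored rule could return a triple that the joint head does not, breaking the claimed equivalence. I would therefore state the tie-break once and observe that, on the generic set of logits where all maxima are strict, both rules coincide, while the measure-zero tie set is resolved identically by construction. With single-valuedness established, the conclusion $\sum_{\delta\in\Delta_i}I(\hat\delta_i,\delta)=1$ for every $i\in\mathcal{I}$ follows at once, recovering constraint~\eqref{eq:C1} at the decoder level.
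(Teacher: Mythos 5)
Your proof takes essentially the same route as the paper's: the joint softmax yields a probability vector over the finite set $\Delta_i$, and the $\arg\max$ selection activates exactly one index, so $\sum_{\delta\in\Delta_i} I(\hat\delta_i,\delta)=1$ and constraint~\eqref{eq:C1} holds. Your additional care about deterministic tie-breaking and the consistency of the factored $q_i$/$n$-$a$-$j$ rule with the joint head goes beyond what the paper's proof addresses, but the core argument is identical.
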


\begin{proof}
Let the SP3 classifier output a probability vector 
$p_i = [p_i^{1}, p_i^{2}, \ldots, p_i^{K}] \in \mathbb{R}^{K}$, 
where $K = |\Delta_i|$ is the number of admissible offloading decisions for WD $i$. 
Since the final decision is taken as 
$\widehat{\delta}_i = \arg\max_{k} p_i^{k}$, 
exactly one index is active while all others are inactive. 
Therefore,
\[
\sum_{k=1}^{K} \widehat{\delta}_i^{k} = 1,
\]
satisfying constraint (13a). \qedhere
\end{proof}

\section{Experiments and Evaluation}

\subsection{Dataset Generation}

The datasets used to train and evaluate the proposed DeepSets-S model were generated by solving the optimization problem formulated in Section~IV using standard nonlinear solvers. Each solution captures the interactions among wireless device (WD) requests, slice configurations, access points (APs), and edge nodes (COIN and MEC). A total of 5,000 simulation runs were conducted under varying user sizes, bandwidth distributions, and slice configurations, resulting in three distinct datasets corresponding to the decomposed sub-problems introduced in Section~V:

\begin{itemize}
    \item \textbf{SP1 dataset:} Takes the user request information as input and produces the optimal intra-slice radio and computing allocation coefficients for each slice as output.
    \item \textbf{SP2 dataset:} Uses the inputs and outputs from SP1 as its input and yields the optimal inter-slice bandwidth allocation coefficients across APs as output.
    \item \textbf{SP3 dataset:} Uses the inputs and outputs from SP2 as its input and produces the optimal offloading decisions as output.
\end{itemize}

For evaluation, we adopted a 10-fold cross-validation procedure to ensure robustness and prevent overfitting.

\subsection{Experiment Environment}

We consider a small-scale scenario for an edge application with a square area of (500 x 500)$m$ where 8 COINs, 1 MEC, 3 APs, and WDs are placed at random. The power spectral density of Gaussian noise $N_0$ is $-174\ dBm/Hz$ \cite{aliyu2024digital}, and according to \cite{jovsilo2022joint}, we represent the achievable physical rate $R_{i,a}$ from WD $i$ to AP $a$ as \( R_{i,a} =B_alog\  (1 + (distance) P^t/N_0) \), where $distance$ is the Euclidean distance between WD $i$ and AP $a$, and $P^t$ is the transmission power of WD $i$. Other simulation parameters are summarized in table \ref{table:sim_params}.

\begin{table}[!t]
\renewcommand{\arraystretch}{1.2}
\caption{Simulation Parameters}
\label{table:sim_params}
\centering
\begin{tabular}{l|l}
\hline
\textbf{Parameter} & \textbf{Value} \\
\hline\hline
COINs node capacity  & $[5, 10]$ x $10^8$ $CPU\ cycles/s$ \cite{lia2022}\\
MEC capacity & $1$ x $10^{10}$ $CPU\ cycles/s$ \cite{lia2022}\\
WD local computing capacity & $2$ x $10^{7}$ $CPU\ cycles/s$ \cite{sardellitti2015joint}\\
Input task size & $[1, 10]$ MB \cite{aliyu2024digital}\\
Bandwidth $B_a$ per AP & 18 MHz \cite{ahmadi2009overview}\\
Transmit Power of WD's & $[10^{-6} , 0.1]W$ \cite{jovsilo2022joint}\\
$distance$ & 4 \\
\hline
\end{tabular}
\end{table}



We conducted a comprehensive evaluation of our DeepSets-S model against other ablations. The models were trained on three set of datasets. The data were generated by solving the previously defined MINLP subproblems with standard optimizers. We performed 300 iterations for each sample request with a $90\%$ confidence interval. Each dataset corresponds to one of the subproblems introduced in Section \ref{ML-model}: SP1 dataset, SP2 dataset, and SP3 dataset. 

In Table \ref{tab:sp1_sp2_perf}, we evaluate the performance of our DeepSets-S model against its variant DeepSets which disables slack, and apply normalization on the data to satisfy the resource allocation constraint, and other state-of-the-art multi-instance learning model; SetTransformer \cite{lee2019set}. We evaluate using Acc@k, where Acc@0.5, Acc@1, and Acc@2 measure the proportion of predictions that fall within 0.5, 1, and 2 units of the ground truth, respectively. 

On SP1, DeepSets-S attains an Acc@0.5 of 76.76\%, which is higher than both the no-slack DeepSets baseline (76.74\%) and the SetTransformer variants (76.71–76.75\%). At the threshold of Acc@1, DeepSets-S records 95.26\%, outperforming DeepSets (95.21\%) and SetTransformer (95.22–95.23\%). For the more relaxed Acc@2, DeepSets-S reaches 99.12\%, surpassing DeepSets (99.04\%) and the SetTransformer models (99.04–99.05\%). However, the SetTransformer with 4 heads achieves the lowest MAE (0.3918) and the highest R² (0.9059).

For SP2 also, DeepSets-S attains the best tolerance-based accuracy across all thresholds. It achieves an Acc@0.5 of 81.85\%, slightly higher than the baseline without slack DeepSets (81.83\%) and the SetTransformer models (81.80–81.83\%). At Acc@1, DeepSets-S records 95. 67\%, exceeding DeepSets (95. 64\%) and close to the SetTransformer results (95.66\%). For the more relaxed Acc@2, DeepSets-S achieves 99.08\%, matching the best SetTransformer variant and surpassing DeepSets (99.05\%). In regression measures, the four-head SetTransformer reports the lowest MAE (0.3474) and the highest R² (0.9674), whereas DeepSets-S achieves 0.3480 and 0.9668.

Figures \ref{fig:AE_a} and \ref{fig:AE_b} present the absolute error of the DeepSets-S model with respect to the number of WDs for SP1 and SP2, respectively. For both SP1 and SP2, the error is higher when the WD size is small, reflecting the limited availability of input information. As the WD size increases, the error decreases steadily and stabilizes at values close to zero, indicating that the model becomes more reliable with larger WD groups. This trend confirms that DeepSets-S generalizes well across varying set sizes and maintains consistent accuracy in large-scale scenarios.

For the SP3 dataset, we evaluate multiclass accuracy (Acc) for the offloading decision vector $\delta$ and binary accuracy (Acc\_bin) to distinguish between local and offload decisions. The proposed DeepSets-S model achieves an Acc of 0.7486 and an Acc\_bin of 0.8824, showing improvements over the no-slack DeepSets, which records an Acc of 0.7227 and an Acc\_bin of 0.8802. The two-head SetTransformer obtains an Acc of 0.7454 and an Acc\_bin of 0.8581, while the four-head variant reaches the highest Acc of 0.7492 but a lower Acc\_bin of 0.8733 compared with DeepSets-S. These results highlight that the slack mechanism in DeepSets-S provides a more balanced trade-off.

\begin{table}[ht]
\centering
\caption{Performance of DeepSets-S Models on SP1 and SP2 against other ablations (best results are bold).}
\label{tab:sp1_sp2_perf}
\scriptsize
\begin{tabular*}{\linewidth}{@{\extracolsep{\fill}}lccccc}
\toprule
\textbf{Models} & \textbf{MAE (\%)} & \textbf{Acc@0.5} & \textbf{Acc@1} & \textbf{Acc@2} & \textbf{$R^2$} \\
\midrule
\multicolumn{6}{c}{\textbf{SP1 Dataset}} \\
\midrule
\textbf{DeepSets-S}             & \textbf{0.3924} & \textbf{0.7676} & \textbf{0.9526} & \textbf{0.9912} & 0.9057 \\
DeepSets                       & 0.3920 & 0.7674 & 0.9521 & 0.9904 & 0.9057 \\
SetTransformer (2 heads)       & 0.3922 & 0.7671 & 0.9522 & 0.9904 & 0.9055 \\
SetTransformer (4 heads)       & 0.3918 & 0.7675 & 0.9523 & 0.9905 & \textbf{0.9059} \\
\\[-0.5ex]
\midrule
\multicolumn{6}{c}{\textbf{SP2 Dataset}} \\
\midrule
\textbf{DeepSets-S}             & \textbf{0.3480} & \textbf{0.8185} & \textbf{0.9567} & \textbf{0.9908} & 0.9668 \\
DeepSets                       & \textbf{0.3480} & 0.8183 & 0.9564 & 0.9905 & 0.9668 \\
SetTransformer (2 heads)       & 0.3475 & 0.8180 & 0.9566 & 0.9905 & 0.9673 \\
SetTransformer (4 heads)       & 0.3474 & 0.8183 & 0.9566 & \textbf{0.9908} & \textbf{0.9674} \\
\bottomrule
\end{tabular*}
\end{table}

\begin{table}[ht]
\centering
\caption{Performance of DeepSets-S model against other ablations on SP3 (best results are bold). Acc (Multiclass accuracy for offloading decision vector $\boldsymbol{\delta}$), Acc\_bin (Binary accuracy for local and offload decisions)}
\label{tab:sp3_perf}
\scriptsize
\begin{tabular*}{\linewidth}{@{\extracolsep{\fill}}lcc}
\toprule
\textbf{Models} & \textbf{Acc} & \textbf{Acc\_bin} \\
\midrule
\textbf{DeepSets-S}               & 0.7486 & \textbf{0.8824} \\
DeepSets                       & 0.7227 & 0.8802 \\
SetTransformer (2 heads)       & 0.7454 & 0.8581 \\
SetTransformer (4 heads)       & \textbf{0.7492} & 0.8733 \\
\bottomrule
\end{tabular*}
\end{table}

\begin{figure}[!t]
  \centering
  \begin{subfigure}{0.8\linewidth}
    \centering
    \includegraphics[width=\linewidth]{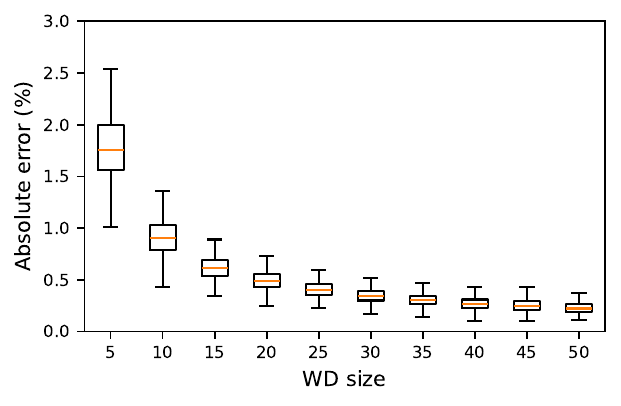}
    \caption{SP1 dataset}
    \label{fig:AE_a}
  \end{subfigure}

  \vspace{0.6em} 

  \begin{subfigure}{0.8\linewidth}
    \centering
    \includegraphics[width=\linewidth]{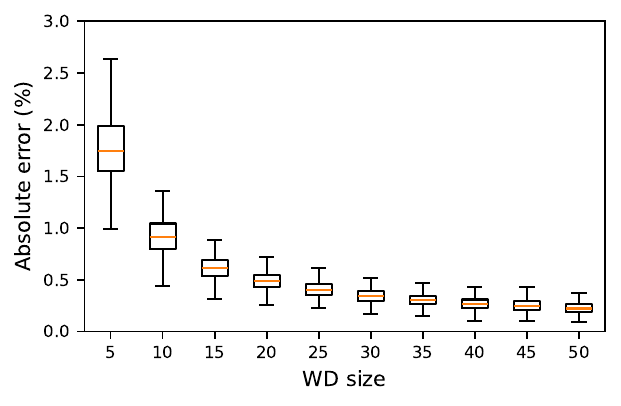}
    \caption{SP2 dataset}
    \label{fig:AE_b}
  \end{subfigure}

  \caption{Absolute error of DeepSet-S model against WD size.}
  \label{fig:Absolute_Error}
\end{figure}

In what follows, we present the results of the proposed DeepSets-S solution compared to some benchmark resource allocation models. Specifically, we employ the following models for evaluation:

\begin{itemize}
    \item \textbf{Optimization Solvers (OS):} We solve \textit{SP1} and \textit{SP2} through Ipopt optimization solver \cite{wachter2006implementation}, and \textit{SP3} through Gurobi optimization solver \cite{gurobi} to achieve the optimal offloading decision vector $\boldsymbol{\delta}$.
    \item \textit{Ours} \textbf{DeepSets-S:} Employs DeepSets-S model that trains on the solution achieved through OS to solve SP1, SP2 and SP3.
    \item \textbf{Proportional Resource Allocation Policy (PRAP):} We define the policies $\mathbb{R}_{\omega}^{pr}$, which share the bandwidth of each AP $a$ to each slice $n$ proportionally, $\mathbb{R}_{\phi_a}^{pr}$ which gives propotional share of the bandwidth of each AP $a$ in slice $n$ to each WD $i$, and finally $\mathbb{R}_{\phi_j}^{pr}$ which gives proportional share of the computing resource of node $j$ in slice $n$ to each WD $i$.
    \item \textbf{Equal Resource Allocation Policy (ERAP):} We define the policies $\mathbb{R}_{\omega}^{eq}$, which gives equal share of the bandwidth of each AP $a$ to each slice $n$, $\mathbb{R}_{\phi_a}^{eq}$ which gives equal share of the bandwidth of each AP $a$ in slice $n$ to each WD $i$, and finally $\mathbb{R}_{\phi_j}^{eq}$ which gives equal share of the computing resource of node $j$ in slice $n$ to each WD $i$.
\end{itemize}

To evaluate the practical computational complexity of the solution, we conducted an empirical analysis by executing each model type with varying user request rates. The experiments were performed using the Python programming language, and the computation time was measured as the elapsed time between the start and end of execution.

Figure \ref{Algorithm Computation Time} illustrates the computation time (on a logarithmic scale) as a function of the request rate. As shown, the OS exhibits the highest computational burden, with computation time increasing with the number of requests. This is due to the exhaustive search to achieve the optimal solution by the standard solvers. In contrast, the DeepSets-S model, once trained, achieves significantly lower execution time, resulting in a 86.1\% reduction in computational time compared to the OS. Evethough, the ERAP and PRAP make quicker resource allocations, they still have to rely on the solver for maing offloading decisions, and hence more computation time than the DeepSets-S model.

The simulations were conducted on a system equipped with a 12th Generation Intel(R) Core(TM) i5-12400F processor running at 2.50 GHz, 16 GB of RAM, and a 500 GB hard drive.

\begin{figure}[!ht]
  \centering
  \includegraphics[width= 0.9\linewidth]{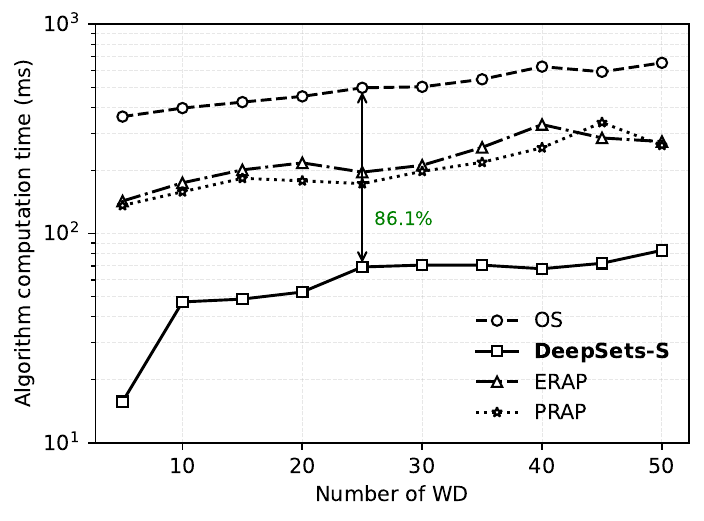}
  \caption{Algorithm computation time}
  \label{Algorithm Computation Time}
\end{figure}

\begin{figure}[!ht]
  \centering
  \includegraphics[width= 1.0\linewidth]{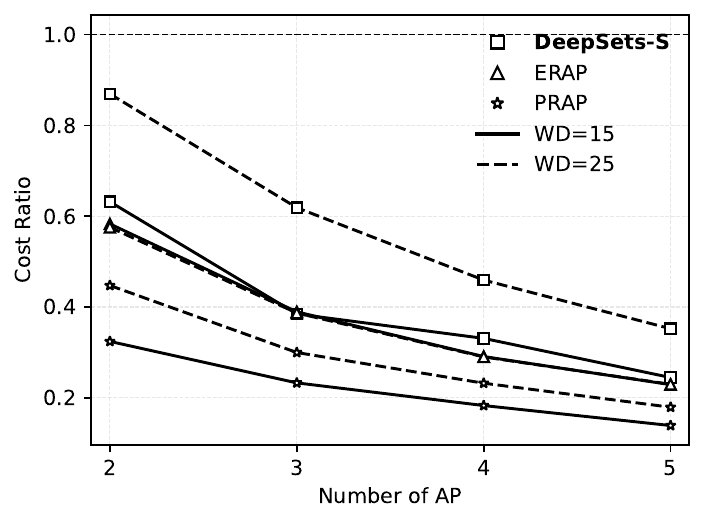}
  \caption{Cost Ratio vs Number of APs}
  \label{CR_AP}
\end{figure}

To evaluate the overall efficiency of our proposed framework, we define the system cost ratio $CR$ as a key performance metric. The $CR$ quantifies the ratio between the system cost reached using the optimization solver ($C_{OS}$) and the system cost reached under the baseline model ($C_{m}$).  

\begin{equation}
    \text{CR} = \frac{C_{\text{OS}}}{C_m}, \quad m \in \{\text{DeepSets-S}, \text{ERAP}, \text{PRAP}\}
\end{equation}

A higher $CR$ indicates that the model is more effective at minimizing system costs and is performing closer to the optimal solution.

Figure \ref{CR_AP} illustrates the $CR$ as a function of the number of APs for WD sizes of 15 and 25. In both scenarios, the proposed DeepSets-S model consistently achieves higher cost ratios than the baselines ERAP and PRAP, confirming its ability to approximate the optimal solution more closely. The performance gap is more pronounced for larger workloads (WD = 25), where DeepSets-S maintains a substantial advantage.

\begin{figure}[!ht]
  \centering
  \includegraphics[width= 1.0\linewidth]{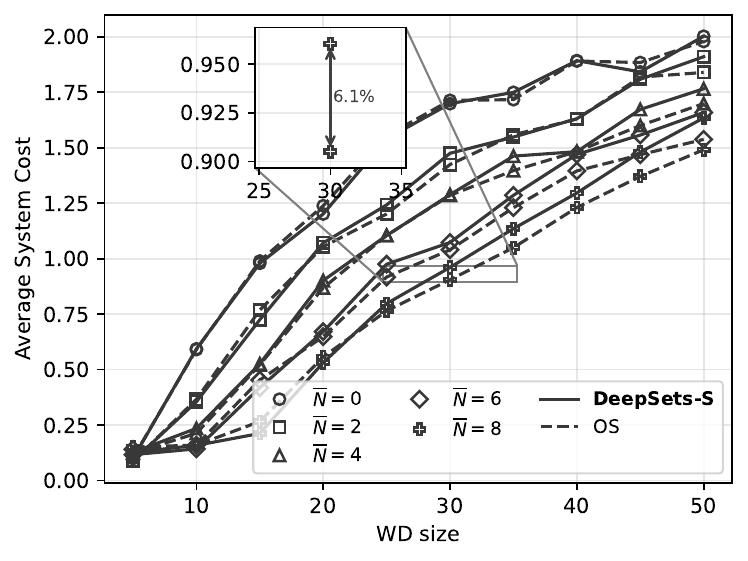}
  \caption{System Cost vs WD size for Different COIN Size}
  \label{system cost for coin}
\end{figure}

Next, Figure \ref{system cost for coin} presents the average system cost by WD size for different numbers of COINS. As expected, the system cost increases with the number of WDs due to the higher aggregate workload in the network. Across all COIN configurations, the proposed DeepSets-S model closely follows the performance of the OS, demonstrating its ability to generalize to larger set of data. The inset plot highlights that the deviation between DeepSets-S and the optimizer remains small, within 6.1\% even at WD size of 30, confirming the effectiveness of the model in approximating near-optimal allocations. Moreover, increasing the number of COINs consistently lowers the system cost, as COINs reduce the workload on the MEC in queueing and transmission delays, and enables more favorable offloading placements, thereby improving overall efficiency.

Figure \ref{fig:system_cost_slice} illustrates the variation of average system cost with respect to the number of WDs under different slice configurations for both the DeepSets-S model and the OS. As the number of WDs increases, the system cost correspondingly rises due to higher aggregated workloads and contention across communication and computing resources. For DeepSets-S, the increase follows a similar trend as the OS, showing that the learned model maintains stable generalization as the task load scales. In all slice configurations, the cost values of DeepSets-S remain very close to those of the OS, with an average difference of just 3.8\% at 30-50 WDs, confirming the model’s ability to approximate near-optimal allocations effectively.

\begin{figure}[!ht]
  \noindent
  \hspace*{-1.5em}
  \includegraphics[width=1.1\linewidth]{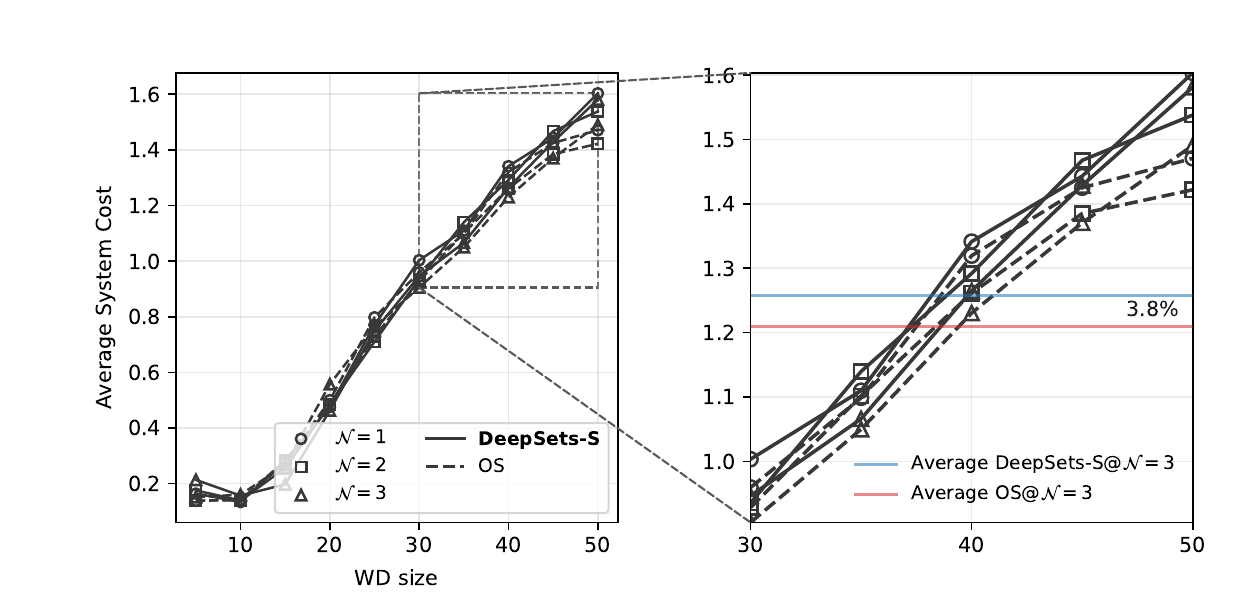}
  \caption{System Cost vs WD size for Different Slice Size}
  \label{fig:system_cost_slice}
\end{figure}

Furthermore, the slight cost reduction observed when the number of slices increases reflects the improved adaptability of the slicing framework. With more slices, tasks can be more evenly distributed across available COIN and MEC nodes, reducing queuing delays and improving parallel resource utilization.

Figure \ref{percentage offload} presents the percentage of tasks offloaded to the MEC out of all offloaded tasks as the number of WDs increases, under different slice configurations. For both the OS and the proposed DeepSets-S model, the percentage of MEC-offloaded tasks gradually increases with the number of WDs. This rise indicates that as system load grows, a larger fraction of the offloaded workload is directed to the MEC.

DeepSets-S closely mirrors the OS trend in every slice configuration, with only slight deviations at higher WD sizes, confirming that the learned model replicates the optimizer’s offloading pattern with high fidelity.

\begin{figure}[!ht]
  \centering
  \includegraphics[width= 1.0\linewidth]{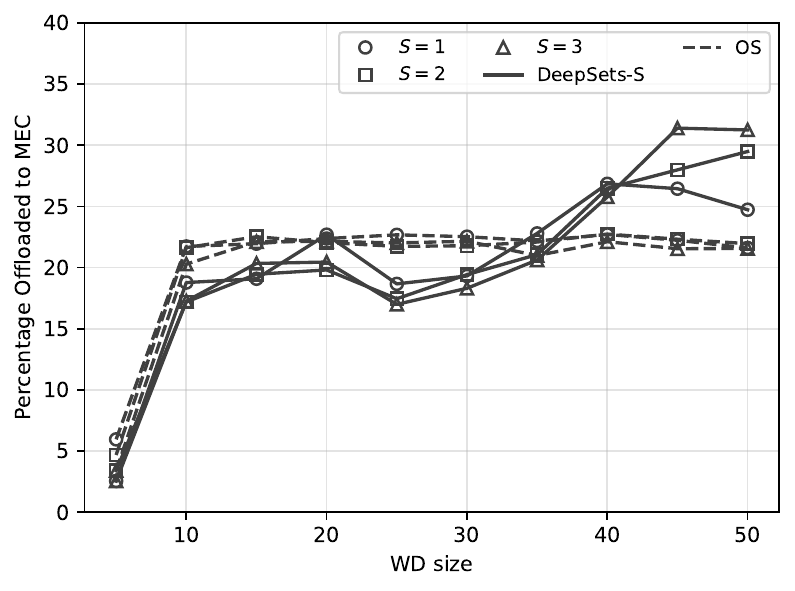}
  \caption{Percentage of tasks offloaded to MEC}
  \label{percentage offload}
\end{figure}


\section{Discussions and Limitations}

So far, we have shown that the formulated MINLP problem can be decomposed into three coupled subproblems—SP1, SP2, and SP3—that can be solved sequentially. The proposed decomposition into SP1, SP2, and SP3 aligns naturally with the hierarchical control of the system, as illustrated in the resource management model. Each Slice Manager (SM) is responsible for intra-slice resource allocation (SP1), while the Slice Resource Orchestrator (SRO) manages inter-slice allocation (SP2) and global offloading decisions (SP3). This design enables distributed decision-making across the COIN–MEC infrastructure while maintaining coordination.

In deployment, DeepSets-S operates at both the slice and network levels. At the slice level, each SM hosts a local DeepSets-S instance that executes the SP1 model to allocate its wireless and computing resources among connected WDs. This local inference allows slices to adapt rapidly to short-term variations in workload and queue states without requiring centralized intervention. At the network level, the SRO runs the global DeepSets-S components for SP2 and SP3, aggregating summarized state information from all SMs.

Simulation results demonstrated the effectiveness of this design. DeepSets-S achieved tolerance-based accuracies above 95\% for SP1 and SP2, significantly improved multiclass offloading accuracy for SP3, and reduced computation time by 86.1\% compared to exact optimization solvers. At the system level, the proposed approach maintained system cost deviations within 6.1\% across COIN configurations and 3.8\% across varying slice numbers, confirming its ability to approximate optimal performance with substantially reduced overhead.

Practically, this framework offers a scalable foundation for AI-native orchestration in future 6G edge systems, combining distributed learning with centralized oversight. Looking ahead, several research directions arise. First, the present work assumes static user and channel conditions, extending the framework to dynamic or stochastic environments via reinforcement or continual learning would improve adaptability. Second, as generating optimal labels through solvers is costly, semi-supervised or self-supervised learning could reduce data dependency and improve generalization to unseen scenarios. Finally, evaluating DeepSets-S on real-time application workloads or public edge-computing datasets will be essential to assess scalability and deployment feasibility.

Beyond Metaverse environments, the proposed framework can be extended to other latency-sensitive domains such as autonomous systems, holographic media streaming, and immersive XR applications. Real-world implementation, however, will depend on the processing and memory capabilities of COIN nodes, which must be carefully analyzed to ensure efficient and stable on-device inference.

\section{Conclusion}

We have addressed the joint resource management and task offloading problem in a slice-enabled in-network edge system, where COIN and MEC nodes collaborate to support delay-sensitive and computation-intensive tasks. The problem was formulated as an MINLP model that jointly optimizes wireless and computing resource allocations with optimal slice selection to minimize overall system cost. As the formulated problem is NP-hard, we decomposed it into three tractable sub-problems, SP1 (intra-slice allocation), SP2 (inter-slice allocation), and SP3 (offloading decision).

To overcome the limitations of online optimization, we proposed a distributed hierarchical encoder-decoder framework (DeepSets-S), which learns to approximate optimal resource allocation and offloading policies from optimization solver-generated data. The model incorporates a shared encoder and task-specific decoders, ensuring permutation equivariance over variable-size WD sets. Once trained, DeepSets-S provides near-optimal inference with substantially reduced computational cost.


\bibliographystyle{ieeetr}
\bibliography{access}

\end{document}